\begin{document}

\newtheorem{problem}{Problem}
\newtheorem{example}{Example}
\newtheorem{definition}{Definition}
\newtheorem{assumption}{Assumption}
\newtheorem{theorem}{Theorem}
\newtheorem{lemma}{Lemma}
\newtheorem{corollary}[theorem]{Corollary}
\newtheorem{proposition}{Proposition}
\newtheorem*{remark}{Remark}
\newtheorem{conjecture}{Conjecture}

\newcommand{\arh}[1]{{\color{blue}{[ARH: #1]}}}
\newcommand{\hmph}[1]{{\color{green}{[Hmph: #1]}}}
\newcommand{\will}[1]{{\color{olive}{[Will: #1]}}}
\newcommand{\ignore}[1]{}

\numberwithin{assumption}{subsection}
\title{Adaptive Identification of SIS Models
    \thanks{*Chi Ho Leung and Philip E. Par\'e are with the Elmore Family School of Electrical and Computer Engineering, Purdue University, USA. William E. Retnaraj and Ashish R. Hota are with the Department of Aerospace Engineering and Electrical Engineering, respectively, IIT Kharagpur, India.  E-mail: leung61@purdue.edu, retnaraj@ieee.org, ahota@ee.iitkgp.ac.in, philpare@purdue.edu. This material is based upon work supported in part by the US-India Collaborative Research Program between the US National Science Foundation (NSF-ECCS \#2032258, \#2238388) and the Department of Science and Technology of India (via IDEAS TIH, ISI Kolkata).}
}
\author{Chi Ho Leung, William E. Retnaraj, Ashish R. Hota, and Philip E. Par\'{e}*}

\maketitle

\begin{abstract}
    Effective containment of spreading processes such as epidemics requires accurate knowledge of several key parameters that govern their dynamics. In this work, we first show that the problem of identifying the underlying parameters of epidemiological spreading processes is often ill-conditioned and lacks the persistence of excitation required for the convergence of adaptive learning schemes. To tackle this challenge, we leverage a relaxed property called initial excitation combined with a recursive least squares algorithm to design an online adaptive identifier to learn the parameters of the susceptible-infected-susceptible (SIS) epidemic model from the knowledge of its states. We prove that the iterates generated by the proposed algorithm minimize an auxiliary weighted least squares cost function. We illustrate the convergence of the error of the estimated epidemic parameters via several numerical case studies and compare it with results obtained using conventional approaches.
\end{abstract}


\section{Introduction}

Accurately forecasting the progression of infectious diseases has become of vital importance following the catastrophic spread of several pandemics over the past century.
Various dynamic models, beginning with the well-known SIS (susceptible-infected-susceptible) model \cite{kermack1927sir} have been developed, often with tailored extensions \cite{giordano2020modelling} in response to different epidemic characteristics.
These models are governed by parameters that dictate different metrics of interest, including the infection trajectory, the peak of infection, and the endemic equilibrium \cite{pastor2015epidemic}.
Therefore, predicting or forecasting the spread of the disease requires estimating these parameters within a limited time in an online manner.


Adaptive identification is an online system identification strategy that updates the parameter estimates based on incoming input and output data \cite{ioannou2006adaptive}. Traditional techniques for addressing adaptive identification problems often involve gradient descent and recursive least square filtering \cite{islam2019recursive}. A longstanding obstacle in successful adaptive estimation and output error convergence is the requirement of a persistently excited regressor signal, which is often overly restrictive \cite{ioannou2006adaptive,narendra1987persistent}. Contemporary solutions to overcome these restrictions involve indirect adaptive control strategies \cite{ioannou1988theory} and robust adaptive control methods \cite{ioannou1996robust}. Other efforts to address the lack of uniform persistent excitation have explored the relaxation of this restrictive requirement, including the concept of initial excitation \cite{jha2019initial, dhar2022initial}.



Another challenge in online system identification is the identifiability of all unknown parameters. While the structural identifiability of a model can be analytically checked in a multitude of ways for both linear and nonlinear systems \cite{hamelin2020observability}, a more pressing requirement is practical identifiability \cite{wieland2021structural, hamelin2020observability}, the lack of which makes parameter estimation a very difficult task despite it being theoretically possible. This challenge has been well-recognized in the context of epidemic models in recent papers \cite{hamelin2020observability,prasse2022predicting}.


In this work we examine the performance of classical adaptive identification algorithms in learning the parameters governing the susceptible-infected-susceptible (SIS) epidemic model. We prove that for this class of models, the regressor matrix is not persistently exciting, which establishes that any algorithm that relies on this condition would not be able to succeed in the parameter identification task. We further show that the regressor matrix is ill-conditioned which leads to practical identifiability issues, and algorithms based on relaxed notions of persistent excitation also do not perform well.

After highlighting the above characteristics, we propose a novel algorithm that builds upon the recursive least squares (RLS) technique, combining the well-known RLS algorithm with the novel concept of an excitation set that is used to construct the main regressor.
We explain the motivation behind this idea, introduce our algorithm, and show that the estimates obtained by the proposed algorithm minimize an auxiliary cost function that weighs exciting data points with unity weight and disregards regressors that are not sufficiently exciting.
We compare our results with existing approaches and demonstrate that our algorithm is effective for adaptive parameter identification of SIS models with noise.

\subsubsection*{Notations}
We denote a matrix $X$ to be positive/negative semi-definite by $X \succeq 0$ and $X \preceq 0$, respectively.
We denote \(\kappa(X)\) as the condition number of matrix \(X\), where the condition number is the ratio of the largest singular value of \(X\) to the smallest singular value of \(X\).
The weighted norm is denoted as $\|x\|_{A} = \sqrt{x^\top Ax}$ for some positive semi-definite matrix $A$, and $\|\cdot\|$ is the 2-norm. 

\section{Problem Formulation}
\label{sec:prob_form}

In this section, we formally present the adaptive identification problem and introduce the necessary notions and tools required to address it.
Subsequently, we justify the need for developing a novel adaptive identification algorithm by providing a motivating example that demonstrates the failure of the classic gradient descent adaptive identification law
in estimating the parameters of the SIS epidemic model.
Consider the class of non-linear discrete-time systems where the parameters are linearly separable from the states:
\begin{equation}\label{eqn:gen_nonlinear_sys}
    x_{k+1} = x_k + \phi(x_k)\theta + \xi_k,
\end{equation}
where $x_k \in \mathcal{X}$ is the state vector, $\theta \in \Theta \subseteq \mathbb{R}_{\geq 0}^p$ is the parameter space, $\phi: \mathcal{X} \mapsto \mathbb{R}^{n \times p}$ is the function that maps states to the regressor matrix, and $\xi_k$ is some bounded unknown perturbation such that $\|\xi_k\| < \nu$ for all $k \geq 0$. We assume that our observation $y_k := x_{k+1} - x_k$ consists of the change in the state variables. We can now write the residuals of each recursive estimation step as:
\begin{equation}\label{eqn:residual}
    r_{k}(\hat{\theta}) = y_k - \phi(x_k)\hat{\theta} = \xi_k.
\end{equation}

Let $\alpha \in (0, 1]$ be the exponential forgetting factor.
Our goal is to design an adaptive identification law $\hat{\theta}_k = \hat{\theta}_{k-1} + f(x_{k-1}, x_{k})$ that minimizes the empirical cost:
\begin{equation}\label{eqn:emp_cost_tseries}
    C_{k}^{(emp)}(\hat{\theta}_k) = \frac{1}{2}\sum_{i=0}^k \alpha^{k-i}\|r_{i}(\hat{\theta}_k)\|^2
\end{equation}
for all $k \geq 0$ given the measurements of $y_k$ and $x_k$.


\subsubsection{Preliminaries}

In the following definitions, we denote $\Psi(\cdot, \cdot, \cdot)$ as the state transition function of the discrete-time dynamics $x_{k+1} = f(x_k)$, such that $x_k = \Psi(k, k_0, x_0)$ when the initial state is $x_0$ at time $k_0$.
Adaptive algorithms are guaranteed to track target values and minimize estimation errors when operating under suitable assumptions, with persistence of excitation being one of the key requirements. There are different variants of definitions of persistent excitation depending on the application\cite{dhar2022initial, jha2019initial, panteley2001relaxed}. We define persistent excitation as follows.
\begin{definition}[Persistent Excitation]
    Let $x_{k+1} = f(x_k)$. A function $\phi(x)$ is said to be persistently exciting w.r.t. $f$ if:
    \begin{equation}\label{eq:pe_definition}
        \sum_{k=l}^{l+L} \phi(\Psi(k, k_0, x_0))^\top \phi(\Psi(k, k_0, x_0)) \succeq \alpha I \quad \forall l \in \mathbb{Z}_{\geq 0}
    \end{equation}
    for some positive constants $L, \alpha$ and initial condition $x_0$.
\end{definition}
Another important aspect of the parameter identification problem is its practical identifiability, which depends on the geometric properties of the Fisher information matrix (FIM) of the cost function defined below.
\begin{definition}[Fisher Information Matrix]\label{def:FIM}
    The Fisher Information Matrix $H$ of the $k^{th}$ empirical cost function \eqref{eqn:emp_cost_tseries} w.r.t. the parameters $\theta$ is defined as $H_{ij} = \partial_{\theta_i}\partial_{\theta_j}C^{(\text{emp})}_k(\theta)$.
\end{definition}
Since the parameters are linearly separable from the observed states, the Fisher Information matrix can be written in the following manner.
\begin{proposition}
    The Fisher Information Matrix of the empirical cost $C^{(\text{emp})}_k(\theta)$ is:
    \begin{equation}
        H = \sum_{i=0}^{k} \alpha^{k-i}\phi(x_i)^\top\phi(x_i).
    \end{equation}
    Further, $H$ is positive semi-definite.
\end{proposition}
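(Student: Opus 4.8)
The plan is to compute the Hessian of the empirical cost directly and then recognize its algebraic structure. First I would expand each residual norm, writing $\|r_i(\theta)\|^2 = \|y_i - \phi(x_i)\theta\|^2 = y_i^\top y_i - 2\theta^\top\phi(x_i)^\top y_i + \theta^\top\phi(x_i)^\top\phi(x_i)\theta$. The key structural observation is that because the parameters enter the dynamics \eqref{eqn:gen_nonlinear_sys} linearly and the observation $y_k = x_{k+1}-x_k$ carries no dependence on $\theta$, the cost $C_k^{(\text{emp})}$ is a quadratic polynomial in $\theta$; hence its Hessian is a constant matrix independent of the evaluation point $\theta$.

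Next I would differentiate twice in $\theta$. The gradient of the $i$-th summand is $-2\phi(x_i)^\top y_i + 2\phi(x_i)^\top\phi(x_i)\theta$, and differentiating once more gives the term-wise Hessian $2\phi(x_i)^\top\phi(x_i)$, which is exactly the block of mixed second partials $\partial_{\theta_a}\partial_{\theta_b}$ of that term. Summing over $i$ against the weights $\tfrac{1}{2}\alpha^{k-i}$ and cancelling the $\tfrac{1}{2}$ against the factor of $2$ produced by the quadratic term yields $H=\sum_{i=0}^k\alpha^{k-i}\phi(x_i)^\top\phi(x_i)$, matching the claimed expression. The constant term $y_i^\top y_i$ and the term linear in $\theta$ contribute nothing under the second derivative.

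For positive semi-definiteness I would argue that each summand is a Gram matrix: for any $v\in\mathbb{R}^p$ we have $v^\top\phi(x_i)^\top\phi(x_i)v=\|\phi(x_i)v\|^2\ge 0$, so $\phi(x_i)^\top\phi(x_i)\succeq 0$. Since $\alpha\in(0,1]$ the weights $\alpha^{k-i}$ are nonnegative, and a nonnegative combination of positive semi-definite matrices is itself positive semi-definite, giving $H\succeq 0$.

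I do not anticipate a genuine obstacle here, as this is a supporting structural result rather than a deep one. The only points requiring slight care are bookkeeping the factor of $\tfrac{1}{2}$ and explicitly invoking that $y_k$ is independent of $\theta$, which is what makes the cross and constant terms drop out of the second derivative and leaves a constant Hessian. Everything else follows directly from the linear-in-parameters structure posited in \eqref{eqn:gen_nonlinear_sys}.
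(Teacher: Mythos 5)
Your proof is correct and is precisely the ``straightforward'' argument the paper alludes to but omits for space: differentiate the quadratic-in-$\theta$ cost twice (the factor $\tfrac{1}{2}$ cancelling the $2$ from the quadratic term) and note that each $\phi(x_i)^\top\phi(x_i)$ is a Gram matrix weighted by the nonnegative factor $\alpha^{k-i}$, so the sum is positive semi-definite. Nothing further is needed.
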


The proof is straightforward and is omitted in the interest of space. Another useful notion related to practical identifiability is the condition number \(\kappa\). Notice that $\kappa(M) \in [1, \infty)$ for any real-valued matrix $M$. The condition number of the FIM determines the ill-posedness of the identification problem. If the condition number is infinite, the identification problem is considered to be ill-posed. On the other hand, the problem is considered to be well-posed if the condition number is finite. However, if it is large, the problem may be ill-conditioned, meaning the FIM is close to singular.

\begin{figure}
    \centering
    \begin{subfigure}{\columnwidth}
        \includegraphics[width=\columnwidth]{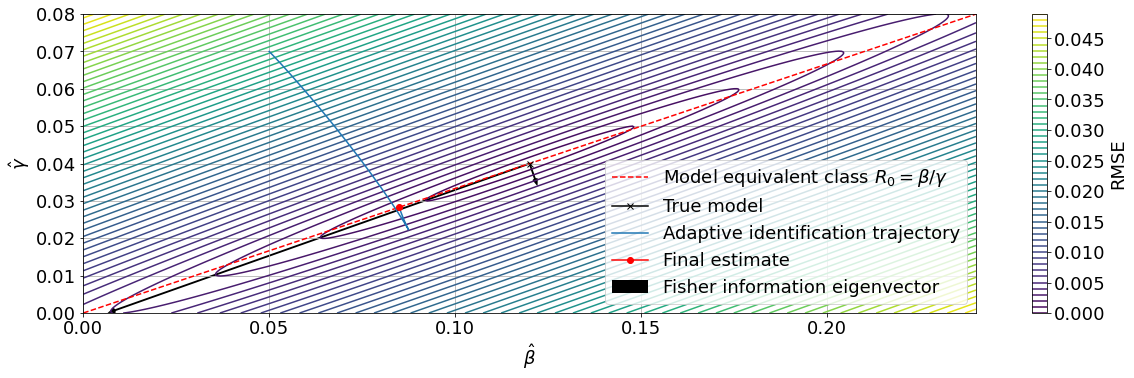}
        \caption{Contour plot of identification error with estimation trajectory}
        \label{fig:ex:contour}
    \end{subfigure}
    \begin{subfigure}{\columnwidth}
        \includegraphics[width=\columnwidth]{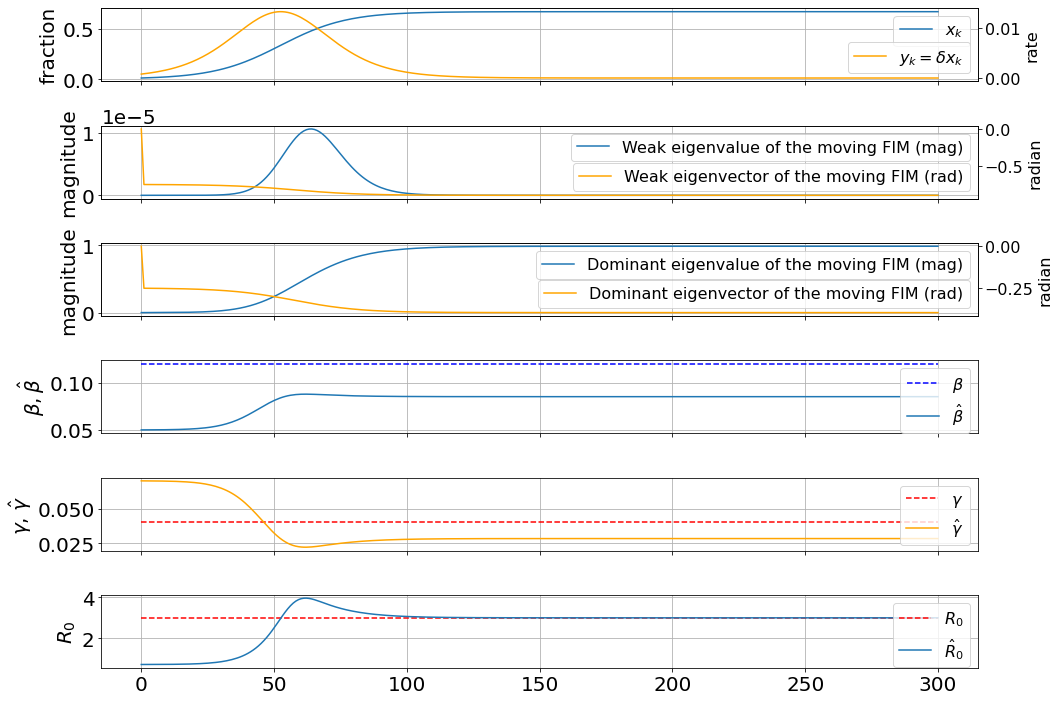}
        \caption{SIS, persistence of excitation, identification error dynamics}
        \label{fig:ex:traj}
    \end{subfigure}
    \caption{
        \footnotesize{
            The trajectory of the estimates on the RMSE contour plot is shown in Fig.~\ref{fig:ex:contour} for the SIS model $(\beta = 0.12$, $\gamma = 0.04$, $x(k_0) = 0.01$, $\hat{\beta}(k_0) = 0.05$, $\hat{\gamma}(k_0) = 0.07)$ without noise added.
            Plots displaying from the top in Fig.~\ref{fig:ex:traj}:
            The infection proportion $x_k$ and net infection rate $\Delta x_k = x_{k+1} - x_k$ over time;
            the next two, the moving FIM's eigenvalues and eigenvector, where the moving FIM is computed as $\sum_{i=l}^{l+3}\phi(x_i)^\top \phi(x_i)$; the next two, system parameters estimates $(\hat{\beta}, \hat{\gamma})$ which fail to converge to the true solution; and lastly, the estimated basic reproduction number computed as $\hat{\mathcal{R}}_0 = \frac{\hat{\beta}}{\hat{\gamma}}$.
        }}
    \label{fig:ex:error_main}
\end{figure}

\subsubsection{Motivating Example}\label{sec:motiv_ex}
In this subsection, we examine the convergence behavior of the classic gradient descent adaptive law applied to the SIS spreading process.
\begin{example}\label{ex:sis_ti_sys}
    Consider the SIS epidemic dynamics given by
    \begin{equation}\label{eq:single_node_SIS}
        x_{k+1} = x_k + (1- x_{k})\beta x_{k} - \gamma x_{k},
    \end{equation}
    where the state $x_k$ represents the proportion of infected individuals in the population, $\beta$ is infection rate, and $\gamma$ is the recovery rate. Consider  the adaptive identification law borrowed from the formalism in \cite{narendra2012stable}:
    \begin{equation}\label{eq:sisExam:AdaIdLaw}
        \hat{\theta}_{k+1} = \hat{\theta}_k + \phi(x_k)^\top\left(y_k - \phi(x_k) \hat{\theta}_k\right),
    \end{equation}
    where $\phi(x_k) := [(1-x_k)x_k\ -x_k]$, $y_k := x_{k+1} - x_{k}$, and $\theta := [\beta \quad \gamma]^\top$.
    Note that $\phi(x_k)^\top(y_k - \phi(x_k) \hat{\theta}_k)$ is known as the negative gradient of the stage cost in \eqref{eqn:emp_cost_tseries}.
    From Fig.~\ref{fig:ex:error_main}, we can see that the classic gradient descent adaptive identification law fails to converge to the true parameters, and the estimates converge to a value in the $\mathcal{R}_0$ equivalent class, which is the set of parameters characterized by having the same basic reproduction number~$\mathcal{R}_0 = \frac{\beta}{\gamma}$. This phenomenon is due to lack of persistent excitation and an ill-conditioned identification problem. 
\end{example}
In the second and third subplots of Fig.~\ref{fig:ex:traj}, note that the FIM's eigenvalues span many orders of magnitude, that is from less than $10^{-5}$ to $1$, over time.
It is worth noting that this phenomenon is not unique to the SIS compartmental model, but exists in nonlinear multi-parameter models from various fields of study \cite{transtrum2015perspective}.
These issues motivate the development of a novel adaptive identification process.

\section{Main Results}


One of the reasons that makes the traditional gradient descent method fail to converge in Example~\ref{ex:sis_ti_sys} is the lack of persistent excitation. The following proposition states that the lack of persistent excitation is a property of the SIS model regardless of the initial condition.

\begin{proposition}
    The regressor matrix $\phi(x_k) := [(1-x_k)x_k\ -x_k]$, $y_k := x_{k+1} - x_{k}$ is not persistently exciting with respect to the SIS dynamics in~\eqref{eq:single_node_SIS}.
\end{proposition}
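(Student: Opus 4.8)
The plan is to exploit the fact that the regressor $\phi(x_k)$ is a single row vector in $\mathbb{R}^{1\times 2}$, so each summand $\phi(x_k)^\top\phi(x_k)$ appearing in \eqref{eq:pe_definition} is a rank-one positive semidefinite matrix. A sum of such rank-one terms over a window $[l,l+L]$ can be positive definite only if the participating regressors span $\mathbb{R}^2$. Writing $\phi(x_k)^\top = x_k\,[\,1-x_k,\ -1\,]^\top$, two regressors $\phi(x_j)^\top$ and $\phi(x_k)^\top$ are linearly independent precisely when $x_j \neq x_k$ (with both states nonzero). Hence the windowed Gramian in \eqref{eq:pe_definition} fails to satisfy $\succeq \alpha I$ whenever the states within that window collapse onto a single value, and my task reduces to showing this collapse is unavoidable.

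First I would establish that the scalar SIS map $g(x) = x + (1-x)\beta x - \gamma x$ converges to an equilibrium from any admissible initial condition $x_0 \in [0,1]$. The fixed points are the disease-free state $x^\star = 0$ and the endemic state $x^\star = 1 - \gamma/\beta$; for epidemiologically meaningful rates the map is monotone on $[0,1]$ and the relevant equilibrium is locally attracting (one checks $g'(x^\star) = 1-\beta+\gamma$, so $|g'(x^\star)|<1$ on the endemic branch), which yields $x_k \to x^\star$.

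Next I would propagate this convergence through the Gramian. Fixing any window length $L$, as $l \to \infty$ every state $x_l,\dots,x_{l+L}$ tends to the common limit $x^\star$, so by continuity of $\phi$ the windowed sum converges to $(L+1)\,\phi(x^\star)^\top\phi(x^\star)$. This limit is a single rank-one (or, at the disease-free point, zero) matrix, hence singular, with smallest eigenvalue zero. Invoking continuity of eigenvalues in the matrix entries, the smallest eigenvalue of $\sum_{k=l}^{l+L}\phi(x_k)^\top\phi(x_k)$ therefore tends to $0$ as $l \to \infty$.

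Finally I would conclude by contradiction: were $\phi$ persistently exciting, there would exist fixed constants $L,\alpha > 0$ with the windowed Gramian $\succeq \alpha I$ for all $l$, forcing its smallest eigenvalue to remain above $\alpha$; this contradicts the limit just established. The main obstacle I anticipate is the convergence step, namely pinning down the convergence of this discrete, logistic-type map rigorously across the full admissible parameter range (including boundary or oscillatory regimes) and then quantifying, via a continuity or perturbation estimate, exactly how small $\lambda_{\min}$ of the windowed Gramian becomes as the states cluster near $x^\star$.
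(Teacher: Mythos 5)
Your proposal is correct and follows essentially the same route as the paper's proof: both arguments use convergence of the SIS state to its equilibrium (endemic when $\mathcal{R}_0>1$, disease-free otherwise), observe that the windowed Gramian then approaches $(L+1)$ copies of a rank-deficient (rank-one or zero) matrix, and conclude that its smallest eigenvalue drops below any fixed $\alpha$ for $l$ large enough. The paper carries out this last step via an explicit $\epsilon$-perturbation expansion of $\phi(x_k)^\top\phi(x_k)$ around the equilibrium, which is just a concrete instantiation of the eigenvalue-continuity argument you invoke, and it likewise asserts (rather than proves) the asymptotic stability of the equilibria that you flag as your main remaining obstacle.
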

\begin{proof}
    We first notice that the SIS model in~\eqref{eq:single_node_SIS} has two equilibria: $x^{(1)} = 0$ and $x^{(2)} = 1 - \frac{\gamma}{\beta}$,
    with the epidemiological threshold $\mathcal{R}_0 = \frac{\beta}{\gamma}$.
    When $\mathcal{R}_0 \leq 1$, \eqref{eq:single_node_SIS} is asymptotically stable around $x^{(1)}$; when $\mathcal{R}_0 > 1$, \eqref{eq:single_node_SIS} is asymptotically stable around $x^{(2)}$.
    Fix any $L>0$ and $\alpha>0$; we want to show that there exists an $l$ such that the smallest eigenvalue:
    \begin{equation*}
        \lambda_{\min}\left(\sum_{k=l}^{l+L} \phi(x_k)^\top \phi(x_k)\right) < \alpha,
    \end{equation*}
    where
    \begin{equation*}
        \phi(x_k)^\top \phi(x_k) = x_k^2\begin{bmatrix}
            (1-x_k)^2 & -(1-x_k) \\
            -(1-x_k)  & 1
        \end{bmatrix} =: A(x_k).
    \end{equation*}
    When $\mathcal{R}_0 > 1$, \eqref{eq:single_node_SIS} is asymptotically stable around $x^{(2)}$. Therefore, for all $\epsilon > 0$, there exists a $K$ such that $|x_{k} - x^{(2)}| < \epsilon$ for every $k \geq K$.
    Without loss of generality, assume $x_k$ is approaching $x^{(2)}$ from above, then we can write $x_k = x^{(2)} + \epsilon_k$ where $0 < \epsilon_k < \epsilon$ $\forall~k \geq K$.
    Thus, since $\mathcal{R}_0^{-1} = (1-x^{(2)})$,
    \begin{small}
        \begin{align*}
             & A(x_k) = (x^{(2)} + \epsilon_k)^2\begin{bmatrix}
                                                    (1-x^{(2)} - \epsilon_k)^2 & -(1-x^{(2)} - \epsilon_k) \\
                                                    -(1-x^{(2)} - \epsilon_k)  & 1
                                                \end{bmatrix} \\
             & \quad = (x^{(2)})^2\begin{bmatrix}
                                      \mathcal{R}_0^{-2}  & -\mathcal{R}_0^{-1} \\
                                      -\mathcal{R}_0^{-1} & 1
                                  \end{bmatrix} + \epsilon_k\biggl(
            (x^{(2)})^2\begin{bmatrix}
                           -2\mathcal{R}_0^{-1} + \epsilon_k & 1 \\
                           1                                 & 0
                       \end{bmatrix}                                           \\
             & \ \ \ \ \ \  \qquad +
            (2x^{(2)}+\epsilon_k)\begin{bmatrix}
                                     \left(\mathcal{R}_0^{-1} - \epsilon_k\right)^2 & -(\mathcal{R}_0^{-1} - \epsilon_k) \\
                                     -(\mathcal{R}_0^{-1} - \epsilon_k)             & 1
                                 \end{bmatrix}
            \biggr)                                                                                    \\
             & \quad = (x^{(2)})^2 aa^\top + o(\epsilon)I,
        \end{align*}
    \end{small}
    where $a = \begin{bmatrix}
            \mathcal{R}_0^{-1} & -1
        \end{bmatrix}^\top$.
    We then rewrite $\sum_{k=l}^{l+L}A(x_k)$ as $(L+1)(x^{(2)})^2 aa^\top + o(\epsilon)I$.
    Since $aa^\top$ is rank deficient, we can pick an $\epsilon$ such that $\lambda_{\min}((L+1)(x^{(2)})^2 aa^\top + o(\epsilon)I) < \alpha$.
    When $\mathcal{R}_0 > 1$, we replace $x^{(2)}$ with $0$, which leads to $A(x_k) = o(\epsilon)I$.
    Therefore, the same conclusion is achieved by picking a sufficiently small $\epsilon$ such that $\lambda_{\min}(\sum_{k=l}^{l+L}A(x_k)) = \lambda_{\min}(o(\epsilon)I) < \alpha$.
\end{proof}

The lack of persistent excitation of SIS models is primarily due to the rank deficiency of the FIM induced from $\phi(x_k)$ as $x_k$ approaches an equilibrium. Therefore, we can expect other compartmental models with the number of parameters $p$ greater than the number of informative observed states $n$ to exhibit a similar lack of persistent excitation. In addition, the period of excitation in spreading models usually associates with the short initial transient states of the infection. For this reason, the relaxed notion of persistent excitation introduced in \cite{dhar2022initial, jha2019initial, panteley2001relaxed} could potentially be useful for developing an effective adaptive identification algorithm.


\begin{definition}[Initial Excitation]\label{def:initially_exciting}
    Let $x_{k+1} = f(x_k)$. A function $\phi$ is said to be initially exciting with respect to $f$ if:
    \begin{equation}
        \sum_{k=k_0}^{L} \phi(\Psi(k, k_0, x_0))^\top \phi(\Psi(k, k_0, x_0)) \succeq \alpha I,
    \end{equation}
    for some positive constants $L, \alpha$, and initial condition $x_0$.
\end{definition}

\begin{remark}
    Note that $\sum_{i=0}^{L} \phi(\Psi(k, k_0, x_0))^\top \phi(\Psi(k, k_0, x_0))$ is the Fisher Information matrix of the empirical cost~\eqref{eqn:emp_cost_tseries} at step $L$ when $\alpha=1$.
\end{remark}

However, initial excitation alone is not sufficient for addressing the challenges of adaptive identification for nonlinear spreading processes, illustrated by the following example.

\begin{example}\label{ex:ie_grad_fail}
    In this example, we apply the initial excitation-based multi-model adaptive identification (IE-MMAI) algorithm introduced in \cite{dhar2022initial} to the SIS epidemic with process and observation noise (Figure \ref{fig:ex:mmai}). IE-MMAI fails to converge due to a high FIM condition number, which indicates a highly-skewed error contour induced by the model structure as demonstrated in Fig.~\ref{fig:ex:contour}.
\end{example}

\begin{figure}
    \centering
    \includegraphics[width=\columnwidth, trim = 0 0 0 .28cm, clip]{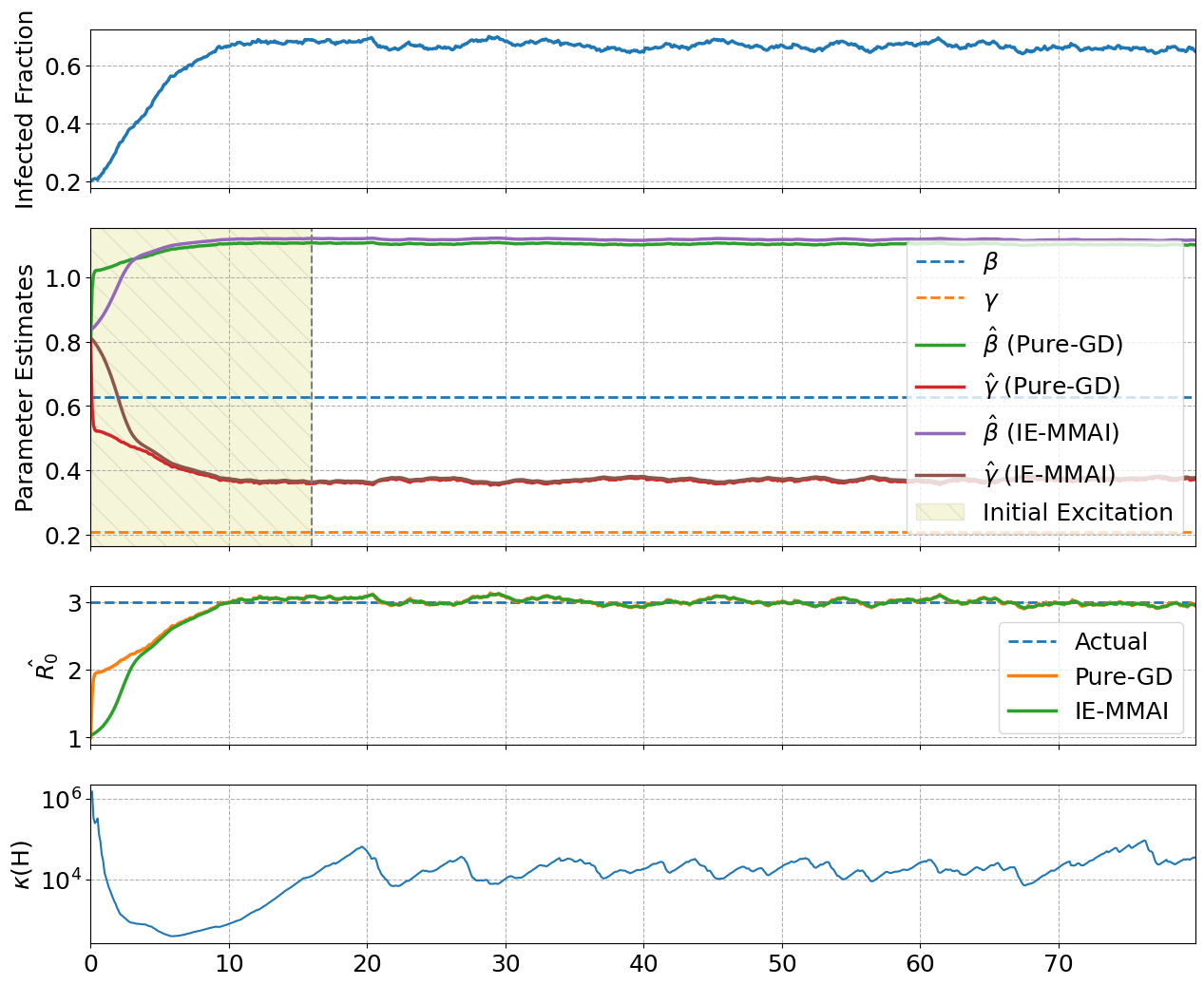}
    \caption{\footnotesize{Plots displaying (from top) the evolution of infected proportion for the SIS model (\(\beta = 0.62929\), \(\gamma = 0.20976\), \(R_0 = 3.0\)) with Gaussian process noise; parameter estimates over time from two chosen baseline algorithms: gradient descent (Pure-GD) (Example \ref{ex:sis_ti_sys}) and initial excitation-based multi-model adaptive identification (IE-MMAI) \cite{dhar2022initial}; and the estimated reproduction number over time using each method; condition number of the Fisher Information Matrix of the empirical cost function (Definition \ref{def:FIM}) in the log scale over time for \(\alpha=0.94\).}}
    \label{fig:ex:mmai}
\end{figure}

Models with the highly-skewed error contour are known to be ``sloppy'' \cite{transtrum2011geometry} or practically unidentifiable \cite{hamelin2020observability} in the literature. Note that, consistent with our discussion in Example~\ref{ex:sis_ti_sys}, the estimated reproduction number converges to the actual value even though the underlying parameters do not.

One natural solution to such a problem is to incorporate second-order information when computing the descent direction upon the arrival of every new data point.
We introduce the following modification to the classic recursive least square algorithm to handle the lack of persistent excitation and practical non-identifiability when applying adaptive identification techniques to the SIS parameter estimation problem.
The proposed algorithm is detailed in Algorithm~\ref{alg:g_rls}, and we refer to it as the Greedily-weighted Recursive Least Squares (GRLS) Algorithm. We now introduce the notion of the \textit{optimal excitation set} and the \textit{greedy excitation set}, inspired by the notion of initial excitation, to illustrate the workings of the GRLS Algorithm.

\begin{definition}[Optimal Excitation Set]\label{def:optimal_set}
    A subset of data points \(\mathcal{E}\) in $\mathcal{D}_K = \{k\in\mathbb{Z}: k_0 \leq k \leq K\}$
    is optimally exciting if:
    \begin{equation*}
        \mathcal{E}_{\text{opt}}(K) = \underset{\substack{E \subseteq \mathcal{D}_K}}{\arg\min}\ \kappa\Big(\sum_{k \in E} \phi(\Psi(k, k_0, x_0))^\top \phi(\Psi(k, k_0, x_0))\Big).
    \end{equation*}
\end{definition}

The main modification of Algorithm~\ref{alg:g_rls} to  classic recursive least square filtering is that it attempts to store the data points that belong to the optimally exciting set so they will not be diluted by less informative new incoming data. However, solving for the optimally exciting set at each update iteration is expensive. Therefore, we propose a feasible approach to obtain a sub-optimal excitation set via a greedy algorithm.

\begin{definition}[Greedy Excitation Set]\label{def:greedyset}
    The $K^{th}$ data point belongs to the greedy excitation set $\mathcal{E}_{\text{g}}(K)$ if it does not deteriorate the FIM's condition number induced by $\mathcal{E}_{\text{g}}(K-1)$:
    \begin{small}
        \begin{equation*}
            \kappa\Big(\sum_{\substack{ k \in\mathcal{E}_{\text{g}}(K-1)\\ \bigcup \{K\}}} \phi(x_k)^\top \phi(x_k)\Big) \leq \kappa\Big(\sum_{k \in \mathcal{E}_{\text{g}}(K-1)} \phi(x_k)^\top \phi(x_k)\Big).
        \end{equation*}
    \end{small}
\end{definition}

\begin{algorithm}
    \caption{Greedily-weighted Recursive Least Squares}\label{alg:g_rls}
    \begin{algorithmic}[1]
        \Require Input datum $x_{k+1}$
        \If{$k < 0$}
        Initialize $H^{(e)}_0$, $P_0$, $\hat{\theta}_0$, $\Phi^{(e)}_0$, $\upsilon^{(e)}_0$.
        \Else
        \State $H^{(e)} \gets H^{(e)}_k + \phi(x_k)^\top\phi(x_k)$\label{alg:g_rls:line:greedy_begin}
        \If{$\kappa(H^{(e)}) \leq \kappa{(H^{(e)}_k)}$}\label{alg:g_rls:line:best_condition_num_sf}
        \State $\Phi^{(e)}_{k+1}, H^{(e)}_{k+1} \gets [{\Phi^{(e)}_k}^\top, \phi(x_k)^\top]^\top, H^{(e)}$\label{alg:g_rls:line:define_Phi_e_in_greedy}
        \State $\upsilon^{(e)}_{k+1} \gets \upsilon^{(e)}_k + \phi(x_k)^\top(x_{k+1} - x_k)$\label{alg:g_rls:line:define_upsilon_e_in_greedy}
        \State $\Phi \gets \sqrt{1-\alpha}\Phi^{(e)}_{k+1}$\label{alg:g_rls:line:define_Phi_in_greedy}
        \State $H, \upsilon \gets (1-\alpha)H^{(e)}_{k+1}, (1-\alpha)\upsilon^{(e)}_{k+1}$\label{alg:g_rls:line:define_var_in_greedy}
        \Else
        \State $\Phi^{(e)}_{k+1}, H^{(e)}_{k+1}, \upsilon^{(e)}_{k+1} \gets \Phi^{(e)}_{k}, H^{(e)}_k, \upsilon^{(e)}_{k}$\label{alg:g_rls:line:define_var_out_greedy}
        \State $\Phi \gets [\sqrt{1-\alpha}{\Phi^{(e)}_{k+1}}^\top, \phi(x_k)^\top]^\top$\label{alg:g_rls:line:define_Phi_out_greedy}
        \State $H \gets (1-\alpha)H^{(e)}_{k+1} + \phi(x_k)^\top\phi(x_k)$\label{alg:g_rls:line:define_H_out_greedy}
        \State $\upsilon \gets (1-\alpha)\upsilon^{(e)}_{k+1} + \phi(x_k)^\top(x_{k+1} - x_k)$\label{alg:g_rls:line:define_upsilon_out_greedy}
        \EndIf\label{alg:g_rls:line:greedy_end}
        \State $P_{k+1} \gets \frac{1}{\alpha}P_k - \frac{1}{\alpha}P_k{\Phi}^\top (\alpha I + \Phi P_k \Phi^\top)^{-1} \Phi P_k$\label{alg:g_rls:line:P_update}
        \State $\hat{\theta}_{k+1} \gets \hat{\theta}_{k} + P_{k+1}(\upsilon - H\hat{\theta}_k)$\label{alg:g_rls:line:theta_update}
        \EndIf
    \end{algorithmic}
\end{algorithm}
After the introduction of the greedy excitation set, we can briefly summarize the GRLS Algorithm as follows.
\begin{itemize}
    \item Lines~\ref{alg:g_rls:line:greedy_begin}-\ref{alg:g_rls:line:greedy_end} in Algorithm~\ref{alg:g_rls} update the greedy exciting set and the corresponding hyper-parameters for computing $\hat{\theta}_{k+1}$ upon every incoming datum. \begin{itemize}
              \item Line~\ref{alg:g_rls:line:best_condition_num_sf} compares the new condition number with the current condition number.
              \item If $\kappa(H^{(e)}) \leq \kappa{(H^{(e)}_k)}$, then we update the regressor matrices $\Phi^{(e)}_{k+1}$, the FIM $H^{(e)}_{k+1}$ and the corrector term $\upsilon^{(e)}_{k+1}$ of the excitation set on Lines~\ref{alg:g_rls:line:define_Phi_e_in_greedy}-\ref{alg:g_rls:line:define_upsilon_e_in_greedy};
              \item else, $\Phi^{(e)}_{k+1}, H^{(e)}_{k+1}, \upsilon^{(e)}_{k+1}$ remain unchanged as specified in Line~10.
              \item Lines \ref{alg:g_rls:line:define_Phi_in_greedy}-\ref{alg:g_rls:line:define_var_in_greedy}, \ref{alg:g_rls:line:define_Phi_out_greedy}-\ref{alg:g_rls:line:define_upsilon_out_greedy} use $\Phi^{(e)}_{k+1}, H^{(e)}_{k+1}, \upsilon^{(e)}_{k+1}$ to compute $\Phi, H, \upsilon$ which are needed for computing the inverse Hessian matrix $P_{k+1}$ and the new estimate $\hat{\theta}_{k+1}$.
          \end{itemize}
    \item Line~\ref{alg:g_rls:line:P_update} updates the inverse Hessian matrix with an exponential forgetting factor $\alpha$.
    \item Line~\ref{alg:g_rls:line:theta_update}
          updates the parameter estimates.
\end{itemize}
\noindent
We are ready to characterize the optimality of Algorithm~\ref{alg:g_rls}.
\begin{theorem}\label{thm:opt_theta}
    If $P_0$ is positive definite,
    then for all $k \in [0, \infty)$, $\hat{\theta}_{k+1}$ obtained by Algorithm~\ref{alg:g_rls} is the unique minimizer of the cost function:
    \begin{align}\label{eq:cost}
        C_{k}(\hat{\theta}_k) = \sum_{i=0}^{k}w_{i,k}\|r_{i}(\hat{\theta}_k)\|^2 + \alpha^{k+1}\|\hat{\theta}_k - \theta_0\|_{P_0^{-1}}^2
    \end{align}
    with the weighting function defined as:
    \begin{equation*}
        w_{i,k} =
        \begin{cases}
            (1-\alpha)\sum_{l=i}^{k}\alpha^{k-l} & \text{if $i \in \mathcal{E}_{\text{g}}(k)$,} \\
            \alpha^{k-i}                         & \text{otherwise}.
        \end{cases}
    \end{equation*}
\end{theorem}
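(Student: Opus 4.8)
The plan is to reduce the optimality claim to two algebraic invariants maintained by the recursion and verify them by induction on $k$. Since $P_0\succ 0$, the regularizer in~\eqref{eq:cost} is strictly convex and each residual $r_i(\hat\theta_k)=y_i-\phi(x_i)\hat\theta_k$ is affine, so $C_k$ is strictly convex and has a unique minimizer characterized by its first-order condition. Setting $\nabla C_k=0$ (the factor $2$ cancels) and using $\alpha\in(0,1]$, which keeps every $w_{i,k}\ge 0$, the minimizer is $\hat\theta^\star=M_k^{-1}b_k$ with
\[
M_k := \sum_{i=0}^{k} w_{i,k}\,\phi(x_i)^\top\phi(x_i) + \alpha^{k+1}P_0^{-1}, \quad b_k := \sum_{i=0}^{k} w_{i,k}\,\phi(x_i)^\top y_i + \alpha^{k+1}P_0^{-1}\theta_0,
\]
where I identify the prior mean $\theta_0$ with the initialization $\hat\theta_0$. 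It then suffices to prove the invariants $P_{k+1}^{-1}=M_k$ and $\hat\theta_{k+1}=P_{k+1}b_k$ for all $k\ge 0$, since together they give $\hat\theta_{k+1}=M_k^{-1}b_k=\hat\theta^\star$; strict convexity certifies uniqueness.

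First I would handle the $P$-update. In both branches the stacked regressor satisfies $H=\Phi^\top\Phi$ (checking the in-set and out-of-set constructions, Lines~\ref{alg:g_rls:line:define_Phi_in_greedy}--\ref{alg:g_rls:line:define_var_in_greedy} and \ref{alg:g_rls:line:define_Phi_out_greedy}--\ref{alg:g_rls:line:define_upsilon_out_greedy}, using the maintained invariant $H^{(e)}_{k+1}={\Phi^{(e)}_{k+1}}^\top\Phi^{(e)}_{k+1}$). Recognizing Line~\ref{alg:g_rls:line:P_update} as the Woodbury identity applied to $\alpha P_k^{-1}+\Phi^\top\Phi$ yields the recursion $P_{k+1}^{-1}=\alpha P_k^{-1}+H_k$; positive definiteness of $P_k$, hence invertibility of $\alpha I+\Phi P_k\Phi^\top$, propagates from $P_0\succ 0$ and makes every step well-defined. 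Identical bookkeeping on the correlation term gives $b_k=\alpha b_{k-1}+\upsilon_k$, with $\upsilon_k$ the $\upsilon$ assembled at step $k$.

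The heart of the argument, and the step I expect to be the main obstacle, is showing that these recursions reproduce the time-varying weights $w_{i,k}$, which change for every past index $i$ whenever $k$ advances and additionally depend on greedy-set membership. The enabling facts are: (i) the greedy set is monotone, $\mathcal{E}_{\text{g}}(k-1)\subseteq\mathcal{E}_{\text{g}}(k)$, since the construction only ever appends the current index; and (ii) the closed form $(1-\alpha)\sum_{l=i}^{k}\alpha^{k-l}=1-\alpha^{k-i+1}$. Using these I would compute the per-index increment $w_{i,k}-\alpha\,w_{i,k-1}$ and show it equals $1-\alpha$ when $i\in\mathcal{E}_{\text{g}}(k)$ with $i<k$, equals $0$ when $i\notin\mathcal{E}_{\text{g}}(k)$, and equals $w_{k,k}\in\{1-\alpha,\,1\}$ for the fresh index $i=k$ in the greedy and non-greedy branch respectively. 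Matching these increments term-by-term against the $(1-\alpha)$-scaled excitation-set contribution and the possibly-added current regressor in $H_k$ (and in $\upsilon_k$) confirms $\alpha M_{k-1}+H_k=M_k$ and $\alpha b_{k-1}+\upsilon_k=b_k$, closing the induction for the first invariant. This case analysis is delicate precisely because it must simultaneously track the $(1-\alpha)$ reweighting of old excitation points and the branch-dependent weight of the new point.

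Finally, for the second invariant I would substitute $P_{k+1}^{-1}-H_k=\alpha P_k^{-1}$ into the update of Line~\ref{alg:g_rls:line:theta_update}, $\hat\theta_{k+1}=\hat\theta_k+P_{k+1}(\upsilon_k-H_k\hat\theta_k)$, obtaining $\hat\theta_{k+1}=P_{k+1}(\alpha P_k^{-1}\hat\theta_k+\upsilon_k)$; invoking the inductive hypothesis $P_k^{-1}\hat\theta_k=b_{k-1}$ together with the $b$-recursion gives $\hat\theta_{k+1}=P_{k+1}b_k$. The base cases $P_0^{-1}=M_{-1}$ and $\hat\theta_0=P_0 b_{-1}$ hold under the empty-sum conventions $M_{-1}=P_0^{-1}$ and $b_{-1}=P_0^{-1}\theta_0$. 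The strict convexity established at the outset then certifies that this common value $M_k^{-1}b_k$ is the \emph{unique} minimizer of $C_k$, completing the proof.
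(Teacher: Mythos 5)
Your proof is correct and takes essentially the same route as the paper's: an induction that writes $C_k$ as a quadratic form in $\hat{\theta}$, applies the matrix inversion lemma (Woodbury) to identify the update on Line~\ref{alg:g_rls:line:P_update} with $P_{k+1}=A_k^{-1}$, and matches the normal-equation minimizer to the update on Line~\ref{alg:g_rls:line:theta_update}. The only notable difference is that you explicitly verify the weight recursion via the increment computation $w_{i,k}-\alpha w_{i,k-1}$ and greedy-set monotonicity $\mathcal{E}_{\text{g}}(k-1)\subseteq\mathcal{E}_{\text{g}}(k)$, a step the paper's proof asserts without detail when stating $A_k = \alpha A_{k-1} + \sum_{i\in U}\omega_i\phi(x_i)^\top\phi(x_i)$.
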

\begin{proof}
    The proof leverages mathematical induction to proceed, and it suffices to show the inductive step.
    We first note that $C_{k}(\hat{\theta})$ can be written in terms of:
    $C_{k}(\hat{\theta}) = \hat{\theta}^\top A_k\hat{\theta} + 2b_k^\top \hat{\theta} + c_k$,
    where $A_k, b_k, c_k$ are:
    \begin{align*}
        A_k & := \sum_{i=0}^k w_{i, k} \phi(x_i)^\top \phi(x_i) + \alpha^{k+1}P^{-1}_0                  \\
        b_k & := -\sum_{i=0}^k w_{i, k} \phi(x_i)^\top y(x_i) - \alpha^{k+1}P^{-1}_0\theta_0            \\
        c_k & := \sum_{i=0}^k w_{i, k} y(x_i)^\top y(x_i) + \alpha^{k+1}\theta_0^\top P^{-1}_0\theta_0.
    \end{align*}
    Then, $A_k, b_k$ can be computed recursively as
    \begin{align*}
        A_k & = \alpha A_{k-1} +\sum_{i\in U}\omega_i \phi(x_i)^\top\phi(x_i) \\
        b_k & = \alpha b_{k-1} - \sum_{i\in U}\omega_i \phi(x_i)^\top y_i,
    \end{align*}
    where $U = \mathcal{E}_{\text{g}}(k) \bigcup \{k\}$, and  
    $
        \omega_i = \begin{cases}
            (1 - \alpha) & \text{if $i \in \mathcal{E}_{\text{g}}(k)$}, \\
            1            & \text{otherwise}.
        \end{cases}
    $
    By way of induction, assume $\exists~k\in \mathbb{N}$ such that $A_{k-1}$ is positive definite and the unique optimizer of $C_{k-1}(\hat{\theta}_{k-1})$ is $\hat{\theta}_{k} = -A_{k-1}^{-1}b_{k-1}$.
    We define $P_{k+1} := A_k^{-1}$.
    Since $A_{k-1}$ is positive definite, 
    we can apply the matrix inversion lemma\cite[p, 304]{bernstein2009matrix} and obtain a positive definite $P_{k+1}$:
    \begin{align*}
        P_{k+1} & = A_k^{-1}                                                                                                              \\
                & = \frac{1}{\alpha}\left(A_{k-1} +\frac{1}{\alpha}\left(\sum_{i\in U}\omega_i \phi(x_i)^\top\phi(x_i)\right)\right)^{-1} \\
                & = \frac{1}{\alpha}P_k - \frac{1}{\alpha}P_k{\Phi}^\top (\alpha I + \Phi P_k \Phi^\top)^{-1} \Phi P_k,
    \end{align*}
    where
    $
        \Phi = \begin{cases}
            \sqrt{1 - \alpha}\Phi^{(e)}_{k+1}                               & \text{if $k \in \mathcal{E}_{\text{g}}(k)$,} \\
            [\sqrt{1 - \alpha}(\Phi^{(e)}_{k+1})^\top, \phi(x_k)^\top]^\top & \text{otherwise},
        \end{cases}
    $
    and $\Phi^{(e)}_{k+1} = \begin{bmatrix}
            \phi(x_{k_1})^\top, \phi(x_{k_2})^\top, \dots, \phi(x_{k_n})^\top
        \end{bmatrix}_{k_i\in \mathcal{E}_{\text{g}}(k)}^\top$, which satisfies the computation of $\Phi^{(e)}_{k+1}$ and $\Phi$ on Lines~\ref{alg:g_rls:line:define_Phi_e_in_greedy}, \ref{alg:g_rls:line:define_Phi_in_greedy}, \ref{alg:g_rls:line:define_var_out_greedy}, and \ref{alg:g_rls:line:define_Phi_out_greedy} of Algorithm~\ref{alg:g_rls}.
    By the quadratic minimization lemma~\cite{bernstein2009matrix}, the unique minimizer of $C_{k}(\hat{\theta})$ is:
    \begin{small}
        \begin{align*}
            \hat{\theta}_{k+1} & = -A_k^{-1}b_k                                                                                                                                         \\
                               & = A_k^{-1}\Big(-\alpha b_{k-1} + \sum_{i \in U}\omega_i\phi(x_i)^\top y_i\Big)                                                                         \\
                               & = A_k^{-1}\Big(\alpha A_{k-1} \hat{\theta}_k + \sum_{i \in U}\omega_i\phi(x_i)^\top y_i\Big)                                                           \\
                               & = A_k^{-1}\Big(\Big(A_k - \sum_{i \in U}\omega_i\phi(x_i)^\top\phi(x_i)\Big)\hat{\theta}_k + \sum_{i \in U}\omega_i\phi(x_i)^\top y_i\Big)             \\
                               & = \hat{\theta}_k + A_k^{-1}\Big(\sum_{i \in U}\omega_i\phi(x_i)^\top y_i - \Big(\sum_{i \in U}\omega_i\phi(x_i)^\top\phi(x_i)\Big)\hat{\theta}_k \Big) \\
                               & = \hat{\theta}_k + P_{k+1}(\upsilon - H\hat{\theta}_k),
        \end{align*}
    \end{small}

    \noindent
    where $\upsilon := \sum_{i \in U}\omega_i\phi(x_i)^\top y_i$ and $H := \sum_{i \in U}\omega_i\phi(x_i)^\top\phi(x_i)$ matches the computation of $\upsilon$ and $H$ on Lines~\ref{alg:g_rls:line:define_var_in_greedy},
    \ref{alg:g_rls:line:define_H_out_greedy}, and
    \ref{alg:g_rls:line:define_upsilon_out_greedy}
    of Algorithm~\ref{alg:g_rls}.
    By the principle of mathematical induction, $P_n$ is positive definite and $\hat{\theta}_{n+1}$ is the unique minimizer of $C_n(\hat{\theta}_{n})$ for all $n \in \mathbb{N}$.
\end{proof}

While Theorem~\ref{thm:opt_theta} characterizes the cost function which Algorithm~\ref{alg:g_rls} optimizes, the structure of the weighting function $w_{i,k}$ might not be immediately obvious.
The following corollary clarifies the intuition behind $w_{i,k}$.
\begin{corollary}\label{cor:limit_cost}
    If $\alpha < 1$, $P_0$ is positive definite, and the cardinality of the excitation set in the limit, $\lim_{k\to \infty}|\mathcal{E}_g(k)|$, is finite,
    then, as $k \to \infty$, $\hat{\theta}_{k+1}$ obtained by Algorithm~\ref{alg:g_rls} is the unique minimizer of the cost function: 
    \begin{align}\label{eq:cost2}
        C_{k}(\hat{\theta}) = \sum_{i=0}^{k}W_{i,k}\|r_i(\hat{\theta})\|^2,
    \end{align}
    with
    \begin{equation}\label{eq:cor:limit_weight}
        W_{i,k} = \begin{cases}
            1            & \text{if $i \in
            \mathcal{E}_g(K)$}               \\
            \alpha^{k-i} & \text{otherwise}.
        \end{cases}
    \end{equation}
\end{corollary}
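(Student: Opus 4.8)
The plan is to reduce the corollary to a limiting comparison of the quadratic forms identified in the proof of Theorem~\ref{thm:opt_theta}, and then argue that the extra regularization term and the discrepancy between $w_{i,k}$ and $W_{i,k}$ both vanish as $k\to\infty$. First I would invoke the hypothesis that $\lim_{k\to\infty}|\mathcal{E}_g(k)|$ is finite: since the greedy set is monotone in $k$ (a point, once admitted, is never removed in Definition~\ref{def:greedyset}) and its cardinality converges to a finite limit, there is a finite index $K$ past which $\mathcal{E}_g(k)=\mathcal{E}_g(K)=:\mathcal{E}$ for every $k\ge K$. From the proof of Theorem~\ref{thm:opt_theta}, the algorithm's estimate is $\hat{\theta}_{k+1}=-A_k^{-1}b_k$, with $A_k,b_k$ built from the weights $w_{i,k}$ and the regularizer $\alpha^{k+1}P_0^{-1}$; it therefore suffices to compare $A_k,b_k$ with the analogous quantities $A_k',b_k'$ of the unregularized cost \eqref{eq:cost2} that uses $W_{i,k}$.

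Second, I would evaluate the in-set weight in closed form. Summing the geometric series gives $(1-\alpha)\sum_{l=i}^{k}\alpha^{k-l}=1-\alpha^{k-i+1}$, so for $i\in\mathcal{E}$ we have $w_{i,k}=1-\alpha^{k-i+1}$, whereas $W_{i,k}=1$ by \eqref{eq:cor:limit_weight}; for $i\notin\mathcal{E}$ the two weights already coincide, both equal to $\alpha^{k-i}$. Because $\alpha<1$ and each $i\in\mathcal{E}$ is a fixed index, $\alpha^{k-i+1}\to 0$ as $k\to\infty$, and likewise $\alpha^{k+1}\to 0$, which annihilates the regularization term $\alpha^{k+1}\|\hat{\theta}-\theta_0\|_{P_0^{-1}}^2$ in \eqref{eq:cost}.

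Finally, I would quantify the difference of the defining matrices and pass to the limit. Since $\mathcal{E}$ is a fixed finite set, $A_k-A_k'=\alpha^{k+1}P_0^{-1}-\sum_{i\in\mathcal{E}}\alpha^{k-i+1}\phi(x_i)^\top\phi(x_i)$ is a finite sum of terms each tending to $0$, so $A_k-A_k'\to 0$, and the identical computation gives $b_k-b_k'\to 0$. The main obstacle is converting this convergence of matrices into convergence of their inverses, which requires a uniform lower bound on the smallest eigenvalue of $A_k'$. This is precisely what the greedy construction buys us: the excitation set is assembled to keep the Fisher information well-conditioned (finite condition number, hence full rank), so $\sum_{i\in\mathcal{E}}\phi(x_i)^\top\phi(x_i)$ is positive definite and, being independent of $k$, furnishes a uniform bound $A_k'\succeq \sum_{i\in\mathcal{E}}\phi(x_i)^\top\phi(x_i)\succ 0$; boundedness of $\phi$ on the state space together with the geometric forgetting factor supplies a matching uniform upper bound. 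Hence $(A_k')^{-1}$ is uniformly bounded, $A_k-A_k'\to 0$ forces $A_k^{-1}-(A_k')^{-1}\to 0$, and combining this with $b_k-b_k'\to 0$ yields $-A_k^{-1}b_k+(A_k')^{-1}b_k'\to 0$. Thus the algorithm's iterate and the unique minimizer of \eqref{eq:cost2} share the same limit, which is the claim.
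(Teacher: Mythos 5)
Your proposal is correct, and it shares the paper's skeleton: the closed-form evaluation of the in-set weight (the geometric sum giving $1-\alpha^{k-i+1}\to 1$), the vanishing regularizer $\alpha^{k+1}\|\hat{\theta}-\theta_0\|_{P_0^{-1}}^2$, and the role of the finite excitation set are exactly the three ingredients of the paper's argument. Where you genuinely diverge is in what you do with them. The paper stops at the pointwise identification of the weights, i.e.\ $\lim_{k\to\infty} w_{i,k} = W_{i,k}$, and then simply re-invokes Theorem~\ref{thm:opt_theta} to conclude that the algorithm minimizes \eqref{eq:cost2}; it never addresses how convergence of cost functions yields convergence of their minimizers. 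You make precisely that passage rigorous: you observe that monotonicity of the greedy set (Definition~\ref{def:greedyset}) plus the finite-cardinality hypothesis forces $\mathcal{E}_g(k)$ to stabilize at a fixed finite set, reduce the claim to comparing the quadratic data $(A_k,b_k)$ of \eqref{eq:cost} with the unregularized $(A_k',b_k')$ of \eqref{eq:cost2}, show the differences vanish, and then use a uniform lower eigenvalue bound $A_k'\succeq \sum_{i\in\mathcal{E}}\phi(x_i)^\top\phi(x_i)\succ 0$ to convert matrix convergence into convergence of $-A_k^{-1}b_k$ to $-(A_k')^{-1}b_k'$. This buys a precise interpretation of the corollary's informal ``as $k\to\infty$'' and exposes an assumption the paper leaves implicit: positive definiteness of the excitation-set FIM, without which the limiting cost has no unique minimizer at all. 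You attribute that positive definiteness to the greedy construction keeping the condition number finite; strictly speaking this depends on the initialization of $H^{(e)}_0$ in Algorithm~\ref{alg:g_rls} and on the data actually supplying a full-rank excitation set, so it is an additional (mild, and in any case necessary) hypothesis rather than a consequence of Definition~\ref{def:greedyset} alone — but the paper's own uniqueness claim tacitly requires the same thing, so this is a feature of your argument, not a flaw.
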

\begin{proof}
    Note that, since $\alpha<1$, the second term in the cost function in \eqref{eq:cost}, $\alpha^{k+1}\|\hat{\theta} - \theta_0\|_{P_0^{-1}}^2$, goes to zero as $k \to \infty$.
    Furthermore, $\sum_{l=i}^{k}\alpha^{k-l}$ can be rewritten as
    $\sum_{l=0}^{k-i}\alpha^{l}$,
    and, since $\sum_{l=0}^{k-i}\alpha^{l}$ is a geometric sum, $\alpha < 1$, and $i\in \mathcal{E}_g(k)$ is finite by assumption,
    $\lim_{k\to \infty} (1-\alpha)\sum_{l=i}^{k}\alpha^{k-l} = 1$. 
    Thus, the weighting function $w_{i,k}$ can be written as $W_{i,k}$ in \eqref{eq:cor:limit_weight}.  
    Therefore, by Theorem~\ref{thm:opt_theta}, Algorithm~\ref{alg:g_rls} obtains the optimal $\theta^* = {\arg\min}_{\hat{\theta}} C_{k}(\hat{\theta})$ for the 
    cost function in \eqref{eq:cost2}-\eqref{eq:cor:limit_weight}.
\end{proof}
Corollary~\ref{cor:limit_cost} characterizes the asymptotic behavior of the cost function which Algorithm~\ref{alg:g_rls} minimizes.
The algorithm stores the data points which belong to the greedy excitation set by giving them a weight of $1$ while assigning the rest of the unexciting points exponentially decaying weights.

\section{Simulations}

We present parameter estimation results for both the noise-free case and including Gaussian process and observation noise for the SIS dynamics in~\eqref{eq:single_node_SIS}. In Fig.~\ref{fig:compare_scalar_SIS}, we compare the performance of the initial excitation approach, IE-MMAI \cite{dhar2022initial}, and a basic RLS with an exponential forgetting (EF-RLS), to the performance of the GRLS Algorithm we propose in this work.
The same initial estimates of parameters, $\theta_0 = [\beta_0, \gamma_0]^\top = [1, 1]^\top$,
are used for all algorithms with the exception of IE-MMAI, for which the \(m=3\) models were initialized randomly around
\(\theta_0\).
\begin{figure}[h]
    \centering
    \begin{subfigure}[t]{0.49\columnwidth}
        \includegraphics[width=\columnwidth, trim = 0 0cm 0 .1cm, clip]{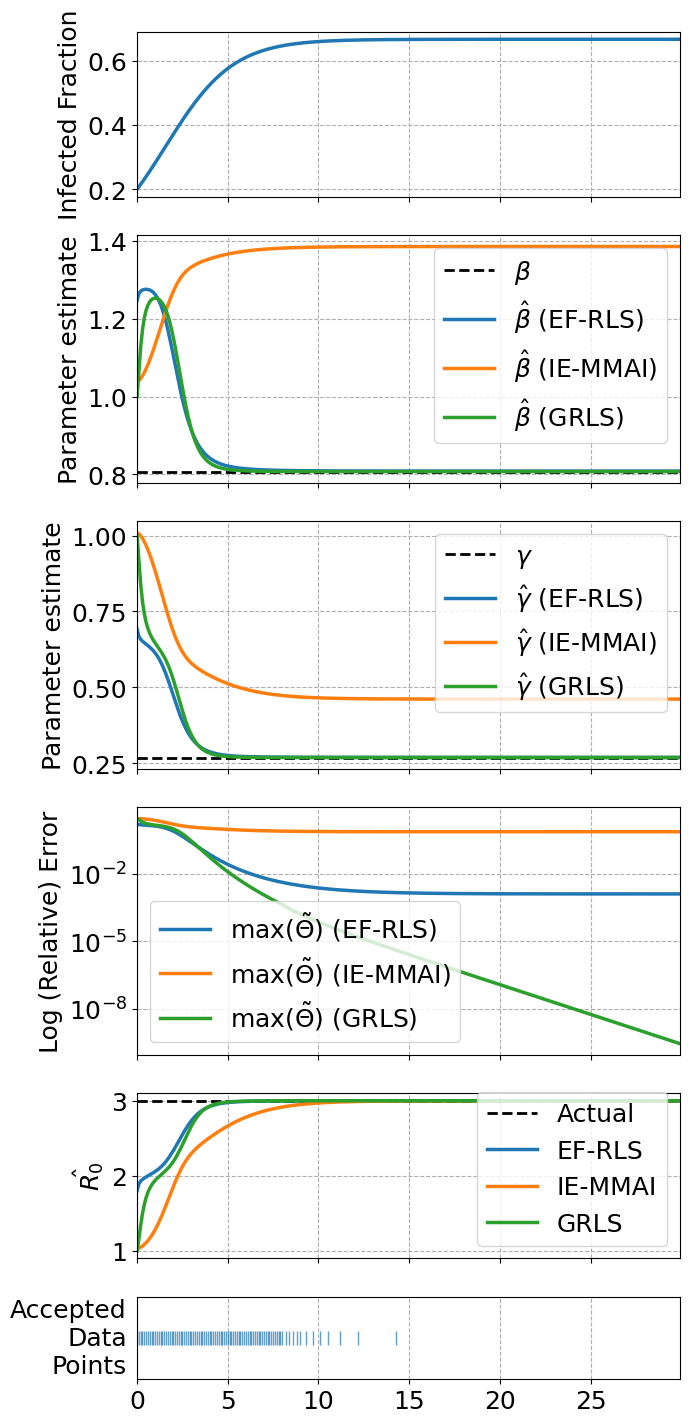}
    \end{subfigure}
    \begin{subfigure}[t]{0.49\columnwidth}
        \includegraphics[width=\columnwidth, trim = 0 0cm 0 .1cm, clip]{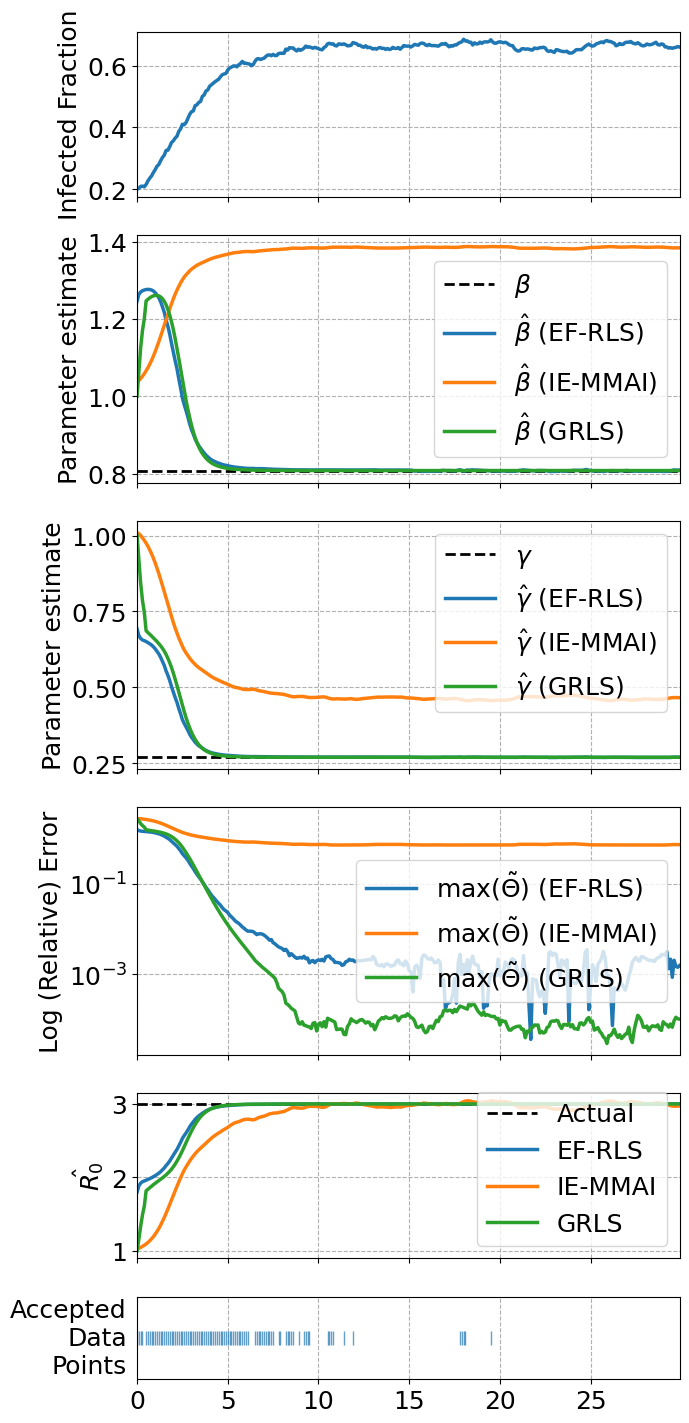}
    \end{subfigure}
    \caption{\footnotesize{Comparing the performance of EF-RLS, IE-MMAI, and GRLS: SIS Simulation (\(\beta = 0.8076\), \(\gamma = 0.2692\));
            parameter estimates over time for both a noise-free and a noisy simulation;
            maximum relative error in log scale for parameter estimates over time;
            reproduction number estimates over time;
            and data indices accepted into $\mathcal{E}_g$ by GRLS.
            Left panel shows the results without noise while the right panel shows it in presence of noise.}}
    \label{fig:compare_scalar_SIS}
\end{figure}
IE-MMAI, designed for only LTI systems and as a gradient descent-like first-order method, is justifiably sensitive to the choice of the initial parameter estimates, and often fails to converge due to the poor practical identifiability of SIS models, as discussed in Examples~\ref{ex:sis_ti_sys} and \ref{ex:ie_grad_fail}.
Note that the estimated reproduction numbers still converge to the actual value despite the lack of convergence of the parameter estimates themselves, which is consistent with our discussion of SIS practical identifiability in Section \ref{sec:motiv_ex}.
On the other hand, even in the presence of noise, the parameter estimates converge for both  EF-RLS and GRLS.

Data accepted into the GRLS greedy excitation set (\ref{def:greedyset}) used to construct the main regressor are diagrammatically depicted in the lowermost block of Fig.~\ref{fig:compare_scalar_SIS}. A majority of points accepted by the algorithm is in the transient rise of states before the equilibrium is reached. The beginning of the epidemic garners a critical amount of information about the epidemic parameters.

Note that while EF-RLS is comparable in performance to GRLS for the noise-free case (left panel of Fig.~\ref{fig:compare_scalar_SIS}), it becomes increasingly oscillatory upon losing excitation in the noisy case (right panel of Fig.~\ref{fig:compare_scalar_SIS}). A closer look at the covariance matrix \(P_k\) in both algorithms reveals a steady increase in the condition number and maximum eigenvalue of \(P_k\) in EF-RLS, while those of GRLS saturate due to its tendency to avoid picking up non-exciting data points
(see Fig.~\ref{fig:Pk_RLS}).
This phenomenon in EF-RLS is observed in both the noise-free and noisy cases and is known in literature as covariance windup \cite{fortescue1981selftune}, which occurs when a non-unity forgetting factor in EF-RLS causes \(P_k\) to get closer to a singular matrix upon losing persistence of excitation. The linear increase in the maximum eigenvalue of \(P_k\) is consistent with past analysis~\cite{cao2001windup}.
Upon running the parameter estimation task for longer times, the EF-RLS estimates diverge.

\begin{figure}[h]
    \centering
    \includegraphics[width=0.9\columnwidth]{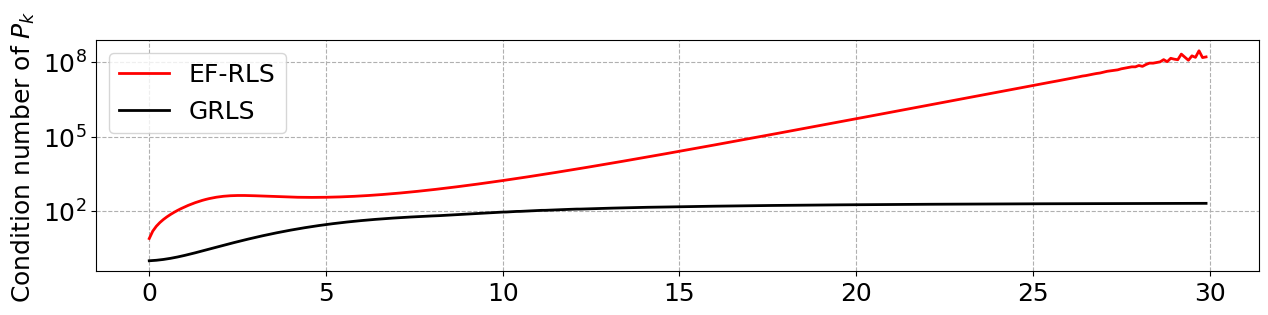}
    \caption{\footnotesize{Plot of the condition number of the covariance matrix \(P_k\) against time for both EF-RLS and GRLS running on data from the noise-free SIS simulation (Fig.~\ref{fig:compare_scalar_SIS}).}}
    \label{fig:Pk_RLS}
\end{figure}

While there are other ways to combat covariance windup in RLS and Kalman filtering applications \cite{lai2023er_rls, cao2001windup}, our approach seeks to mitigate the issues uncovered by Example~\ref{ex:ie_grad_fail} as a first step to discover even more effective methods for online parameter estimation.

\ignore{
\subsection{Networked SIS Model}
The same methodology may also be extended to a networked SIS system with an appropriate modification to the regressor \(\phi(x)\) in \eqref{eqn:gen_nonlinear_sys} as follows:
\begin{equation*}
    \phi(x(t)) =
    \begin{bmatrix}
        x(t)^\top \otimes \text{diag}(1 - x(t)) & - \text{diag}(x(t))
    \end{bmatrix},
\end{equation*}
where \(x(t)\) is an \(n\)-dimensional vector and \(n\) is the number of nodes in the network in consideration. The parameter vector \(\theta\) is the concatenated flattened adjacency matrix \(\text{vec}(B)\) (of infection rates between different nodes, not assumed symmetric) and recovery rates vector \(g\), making it a (\(n^2 + n\))-dimensional vector. We show results for the application of IE-MMAI to data generated by two networked SIS simulations (\(n=7\)) generated by a Star topology graph and an Erd\H{o}s-R\'{e}nyi (ER) graph with the probability of edge generation as \(0.80\). The adjacency matrices were generated with the help of python package \texttt{networkx}. Results are presented in the form of selected parameter estimates and the worst log relative error over time in Fig.~\ref{fig:NetworkSIS}.

\begin{figure}
    \centering
    \begin{subfigure}[t]{0.48\columnwidth}
        \includegraphics[width=\columnwidth]{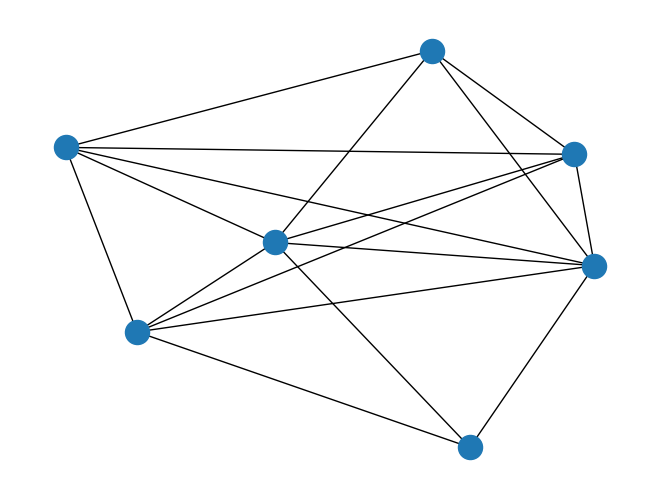}
        \caption{Erd\H{o}s-R\'{e}nyi Network}
        \label{fig:res:Net_ER}
        \vspace*{2mm}
    \end{subfigure}
    \begin{subfigure}[t]{0.48\columnwidth}
        \includegraphics[width=\columnwidth]{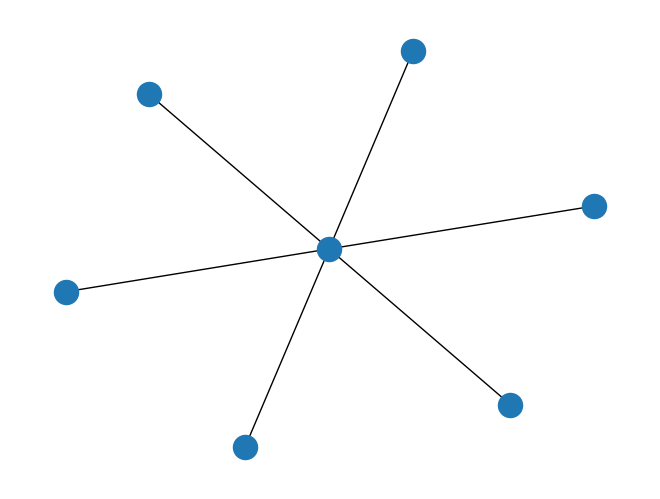}
        \caption{Star Network}
        \label{fig:res:Net_Star}
        \vspace*{2mm}
    \end{subfigure}
    \begin{subfigure}[t]{0.48\columnwidth}
        \includegraphics[width=\columnwidth]{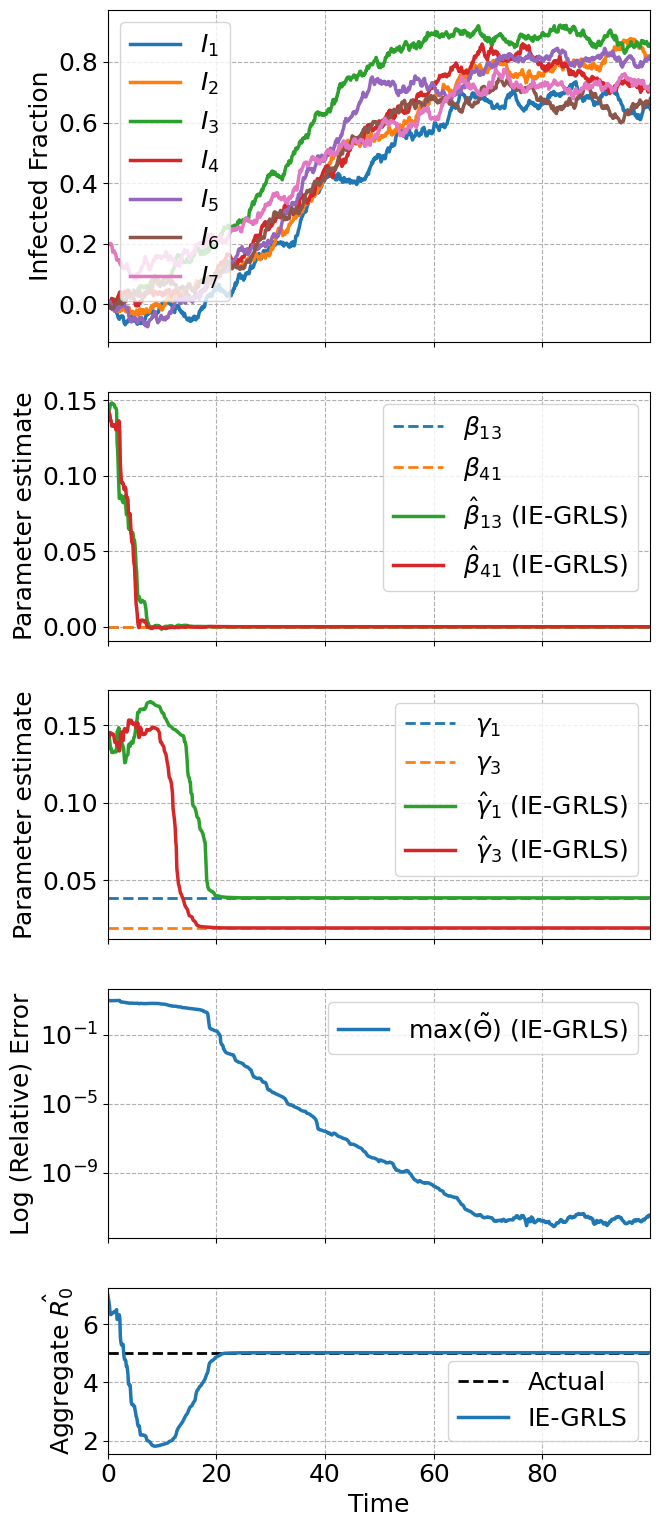}
        \caption{Erd\H{o}s-R\'{e}nyi Network Results}
        \label{fig:res:Net_ER_res}
        \vspace*{2mm}
    \end{subfigure}
    \begin{subfigure}[t]{0.488\columnwidth}
        \includegraphics[width=\columnwidth]{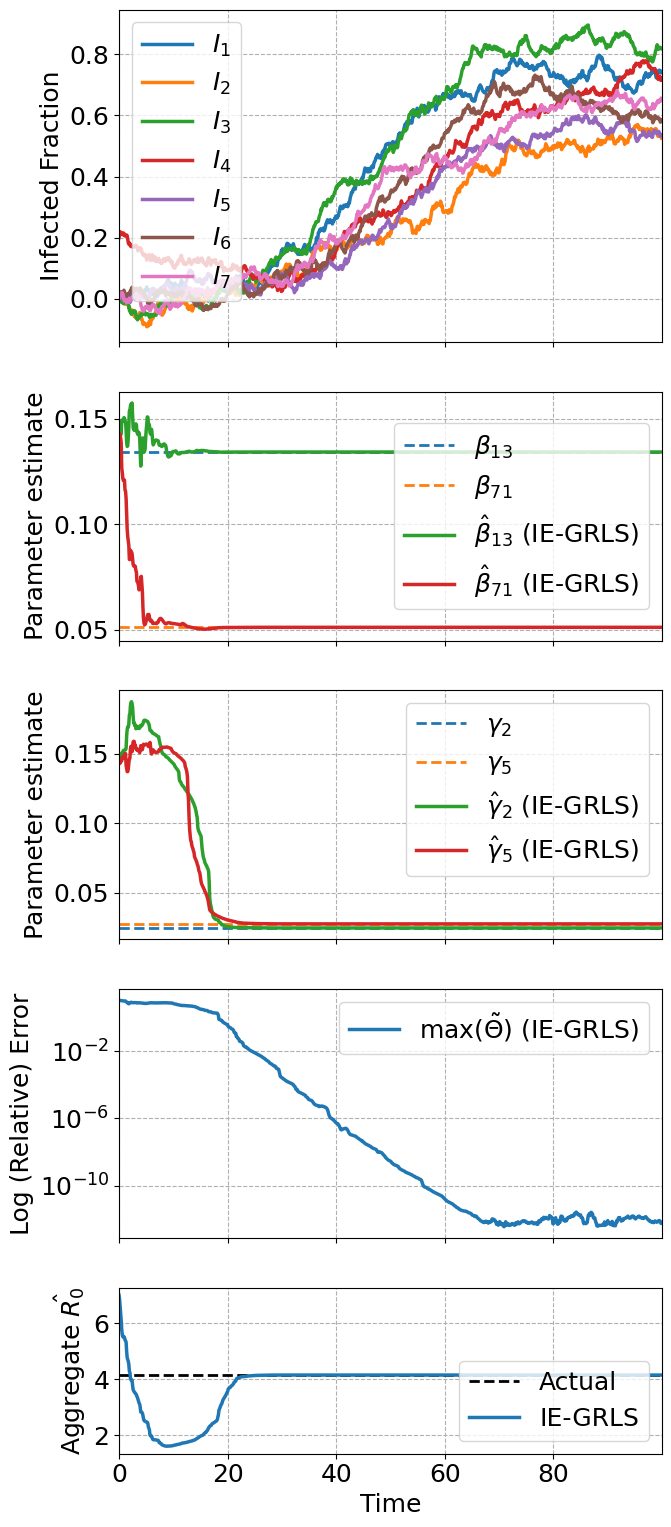}
        \caption{Star Network Results}
        \label{fig:res:Net_Star_res}
        \vspace*{2mm}
    \end{subfigure}
    \caption{\footnotesize{GRLS is applied to a Networked SIS (\(n=7\)) with process noise for two topologies shown in (\subref{fig:res:Net_ER}) and (\subref{fig:res:Net_Star}), the results of which are shown in (\subref{fig:res:Net_ER_res}) and (\subref{fig:res:Net_Star_res}), (from top): Infected fraction in all nodes over time; Selected parameter estimates from the \(B\) matrix; and likewise from the \(g\) vector; worst log relative error in parameter estimates over time, and the estimated aggregate network reproduction number over time.}}
    \label{fig:NetworkSIS}
\end{figure}
}

\section{Conclusion}
We have highlighted two problems that plague the application of adaptive identification tools to SIS models: the lack of persistence of excitation and the practical non-identifiability of epidemic models.
We propose a novel algorithm (GRLS) based on recursive least squares and use the concept of initial excitation to construct an exciting set for the regressor.
The GRLS Algorithm has superior performance compared to conventional algorithms and is able to identify epidemic parameters in the SIS model with process and observation noise.
In particular, while estimates from EF-RLS become oscillatory in the presence of noise, eventually diverging, GRLS is able to maintain a stable estimate.
Future work includes extending GRLS to estimate epidemic parameters on networks, time-varying systems, and other epidemic compartmental models, with the eventual goal of performing adaptive identification using real testing data.


\bibliographystyle{IEEEtran}

\end{document}